\def\BibTeX{{\rm B\kern-.05em{\sc i\kern-.025em b}\kern-.08em
    T\kern-.1667em\lower.7ex\hbox{E}\kern-.125emX}}
\newtheorem{theo}{Theorem}
\newtheorem{propo}[theo]{Proposition}
\newcommand{\itl}{\textit}
\def\fpd#1#2{{\displaystyle\frac{\partial #1}{\partial #2}}}
\def\spd#1#2#3{{\displaystyle\frac{\partial^2 #1}
{\partial #2\partial #3}}}
\def\onehalf{{\frac12}}
\begin{document}

\title{Hamiltonization and geometric integration of nonholonomic mechanical systems}

\author{T.\ Mestdag$^1$, A.M.\ Bloch$^2$ and O.E.\ Fernandez$^2$\\
$^1$Department of Mathematical Physics and Astronomy, Ghent University\\
Krijgslaan 281, S9, 9000 Gent, Belgium\\
$^2$Department of Mathematics, University of Michigan\\
530 Church Street, Ann Arbor, MI-48109, USA \\
email: tom.mestdag@ugent.be, abloch@umich.edu, oscarum@umich.edu}

\maketitle
\thispagestyle{empty}\pagestyle{empty}

\begin{abstract}
\small{\textbf{In this paper we study a Hamiltonization procedure for mechanical systems with velocity-depending (nonholonomic) constraints. We first rewrite the nonholonomic equations of motion as Euler-Lagrange equations, with a Lagrangian that follows from rephrasing the issue in terms of the inverse problem of Lagrangian mechanics. Second, the Legendre transformation transforms the Lagrangian in the sought-for Hamiltonian. As an application, we compare some variational integrators for the new Lagrangians with some known nonholonomic integrators. }}\\
\small{\textbf{Keywords: nonholonomic systems, Lagrangian and Hamiltonian formalism, inverse problem, geometric integration.}}
\end{abstract}


\section{Introduction}

{M}{any} interesting mechanical systems are subject to
additional velocity-dependent (i.e.\ nonholonomic) constraints.
Typical engineering problems that involve such constraints arise for
example in robotics, where the wheels of a mobile robot are often
required to roll without slipping, or where one is
interested in guiding the motion of a cutting tool.

The direct motivation for our paper \cite{paper} was to be found in
interesting results that appeared in \cite{QB}, where
the authors propose a way to quantize some of the well-known
classical examples of nonholonomic systems. On the way to
quantization, the authors propose an alternative Hamiltonian
representation for those nonholonomic systems.  However, the
``Hamiltonization'' method introduced in \cite{QB} can only be
applied to systems for which the solutions are already known
explicitly.

Nonholonomic systems have a more natural description in the Lagrangian framework. In \cite{paper}, we explained how one can associate
to the nonholonomic equations of motion a family of systems of
second-order ordinary differential equations and we applied the conditions of the
inverse problem of the calculus of variations on those associated
systems to search for the existence of a regular Lagrangian. (The inverse problem of the calculus of variations deals with the question of whether or not a given system of second-order differential equations is equivalent with the Euler-Lagrange equations of a yet to be determined regular Lagrangian, see e.g.\ \cite{Santilli}). If such an
unconstrained regular Lagrangian exists for one of the associated
systems, we can always find an associated Hamiltonian by means of
the Legendre transformation. Since our method only made use of the
equations of motion of the system it did not depend on the knowledge
of its explicit solutions.

 A system for which no exact solutions are known can only be
integrated by means of numerical methods. In addition to the above
mentioned application to quantization, our Hamiltonization method
may also be useful from this point of view. Numerical integrators that preserve the underlying geometric structure of a system are called geometric integrators. A geometric integrator
of a Lagrangian system uses a discrete Lagrangian that resembles as
much as possible the continuous Lagrangian (see e.g.\ \cite{West}).
On the other hand, the succes of a so-called nonholomic integrator (see e.g.\
\cite{CortesMartinez,FedZen}) relies not only on the choice of
 a discrete Lagrangian but also on the choice of a discrete version of the constraint manifold. It seems therefore
reasonable that if a free Lagrangian for the nonholonomic system
exists, the Lagrangian integrator may perform better than a
nonholonomic integrator with badly chosen discrete constraints.

In the next section we recall the set-up and the main results of our paper \cite{paper}. In section~\ref{sec3} we compare some nonholonomic and variational geometric integrators for a few of the classical nonholonomic systems. In section~\ref{sec4}, we indicate some ideas on how we wish to extend the results of this paper.

\section{A class of nonholonomic systems}\label{sec2}

We will consider only a certain class of nonholonomic systems on
${\mathbb R}^n$: We will assume that the configuration space of the system is a space with co\"ordinates $(r_1,r_2,s_\alpha)$, that the Lagrangian of the system is
given by the function \begin{equation}\label{nonholLag} L=\frac{1}{2}(I_1{\dot r_{1}}^2+I_2{\dot r_{2}}^2 +
\sum_\alpha I_\alpha {\dot s}_\alpha^2)\end{equation} and that the nonholonomic constraints
are all of the form \begin{equation}\label{con} {\dot s}_\alpha =-A_\alpha(r_{1})\dot r_{2}.\end{equation} The
nonholonomic equations of motion follow from d`Alembert's priciple (see e.g.\ \cite{Bloch}). For systems in our class they are given by the equations
\[\left\{ \begin{array}{l}
\displaystyle \frac{d}{dt}\Big(\fpd{L}{{\dot r}_1}\Big) - \fpd{L}{r_1} =
0,\\[3mm] \displaystyle \frac{d}{dt}\Big(\fpd{L}{{\dot r}_2}\Big) - \fpd{L}{r_2} =
\lambda_\alpha A_\alpha, \\[3mm]\displaystyle \frac{d}{dt}\Big(\fpd{L}{{\dot s}_\alpha}\Big) - \fpd{L}{s_\alpha} =
\lambda_\alpha,\end{array}\right.
\]
together with the constraint equations (\ref{con}). After eliminating the Lagrange multipliers by means of the constraints, one gets
\begin{equation}\label{nonhol} \left\{ \begin{array}{l} \ddot r_{1} =0,\\[2mm] \ddot r_{2} =
-N^2 K \dot r_{1}\dot r_{2},\\[2mm] {\dot s}_\alpha =-A_\alpha\dot
r_{2}, \end{array}\right.
\end{equation} where $ N(r_{1})=(I_2+\sum_\alpha I_\alpha
A_\alpha^2)^{-\frac{1}{2}}$ is related to the invariant measure of
the system and $K=\sum_\beta I_\beta A_\beta A'_\beta$. with $A'_\beta = \partial_{r_1}A_\beta$.

Some basic examples of nonholonomic systems that lie in this class
are the following ones. The classic example of a nonholonomically constrained free particle has
a Lagrangian and constraint given by
\[ L=\frac{1}{2}\left(\dot{x}^{2} + \dot{y}^{2} +
\dot{z}^{2}\right)\quad\mbox{and}\quad \dot{z} +x\dot{y}=0.\]
A knife edge on a horizontal plane corresponds physically to a blade with
mass $m$ moving in the $xy$ plane at an angle $\phi$ to the
$x$-axis. Its Lagrangian and constraint are given by \[ L =
\frac{1}{2}m(\dot{x}^{2}+\dot{y}^{2})+\frac{1}{2}J\dot{\phi}^{2}\quad\mbox{and}\quad
 \dot{x}\sin(\phi) - \dot{y}\cos(\phi) = 0
.\] Also the vertically rolling disk is an example in our class. The assumption that the disk rolls without slipping over the plane gives rise to nonholonomic constraints. Let $R$ be the radius of the disk. If the triple $(x,y,z=R)$ stands for the co\"ordinates of its centre of mass,
 $\varphi$ for its angle with the $(x,z)$-plane and
$\theta$ for the angle of a fixed line on the disk and a vertical line, then the
nonholomic constraints are of the form
\[ \dot x  = R\cos\varphi \dot\theta\quad \mbox{and}\quad \dot y  =  R
\sin\varphi \dot\theta.\] The Lagrangian of the disk is \[
L = \frac{1}{2}M ({\dot
x}^2 + {\dot y}^2) + \frac{1}{2}I {\dot\theta}^2 + \frac{1}{2}J
{\dot\varphi}^2,\] where $I=\frac{1}{2}MR^2$ and $ J=\frac{1}{4}MR^2$ are the moments of inertia and $M$ is the total mass of the disk. For the vertically rolling disk $N$ is a constant and $K=0$.
\begin{center}
\begin{figure}[h]\hspace*{1cm}\includegraphics[scale=0.7]{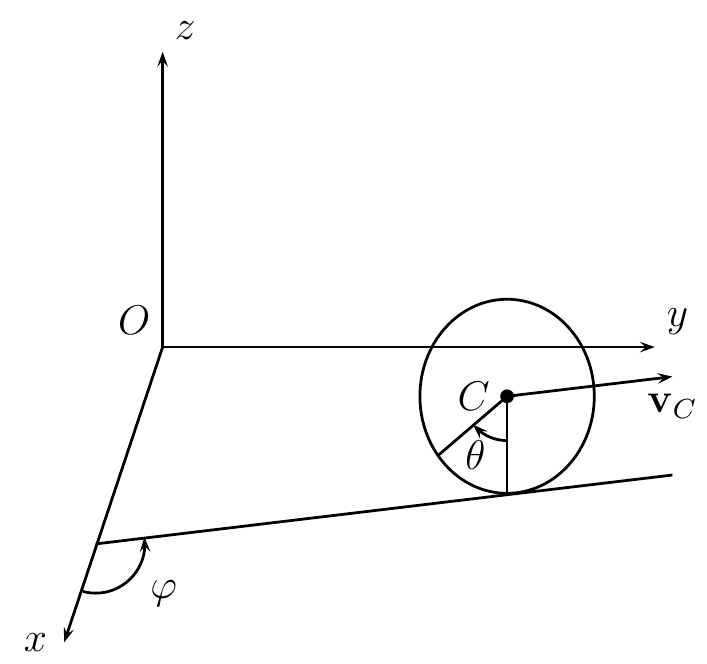}\caption{The vertically rolling disk}
\end{figure}
\end{center}
Finally, also the examples of the mobile robot with fixed orientation and the two-wheeled carriage (see e.g. \cite{FB}) lie within our class.

The equations of motion (\ref{nonhol}) are a mixed set of first- and
second-order differential equations. On the other hand, the
Euler-Lagrange equations
\[
 \frac{d}{dt}\left( \fpd{\tilde L}{{\dot q}^i}
\right)-\fpd{\tilde L}{q^i}=0
\]
of a regular Lagrangian $\tilde L$ are second-order
differential equations (only) [The tilde in $\tilde L$ will always denote that the Lagrangian is free, and that it should not be confused with the original Lagrangian $L$ of the nonholonomic system.]. We therefore need a way to associate a
second-order system to our nonholonomic system. One possible choice
of doing so is the system
\begin{equation} \label{first} \left\{ \begin{array}{l}
\ddot r_{1} =0,\\[2mm] \ddot r_{2} = -N^2 K \dot r_{1}\dot r_{2},\\[2mm] {\ddot
s}_\alpha = -\Big(A'_{\alpha} - N^2 K A_\alpha \Big) \dot r_{1} \dot
r_{2}. \end{array}\right.\end{equation} The above second-order system has the property that its solution set  contains, among other, also the solutions of the nonholonomic
dynamics (\ref{nonhol}) when restricted to the constraints. Another
choice for an `associated system' with the same property is e.g.\
\begin{equation}\label{second}  \left\{ \begin{array}{l}
\hspace*{-2mm} \ddot r_{1} =0,\\[2mm] \ddot r_{2} = -N^2 K \dot r_{1}\dot
 r_{2},\\[2mm] \displaystyle
 {\ddot s}_\alpha =  (A'_{\alpha} - N^2 K A_\alpha) \dot r_{1} \left(\frac{{\dot
s}_\alpha}{A_\alpha}\right) \end{array}\right.    \end{equation}(no sum over $\alpha$). It is clear, that there are in fact an infinite number of such associated second-order systems, but we will concentrate in this paper on the above two. For
some other possible choices, see \cite{paper}.

\begin{propo} {\em \label{Prop1} 1.\  There does not exist a regular Lagrangian whose Euler-Lagrange
equations are equivalent with the second-order system (\ref{first}) (for the classical examples cited above). \newline 2.\ The Euler-Lagrange equations of the Lagrangian \begin{equation} \label{Lag1} \tilde L = \frac{1}{2}I_1{\dot r_{1}}^2 + \frac{1}{2N} \left( C_2 \frac{{\dot
r_{2}}^2}{\dot r_{1}} +\sum_\beta C_\beta \frac{{\dot
s}_\beta^2}{A_\beta \dot r_{1}}\right)\quad (C_\alpha\neq 0) \end{equation} are
equivalent with the second-order system (\ref{second}). If the
invariant measure density $N$ is a constant, then also\begin{equation} \label{Lag2} \tilde L =
\frac{1}{2}I_1{\dot r_{1}}^2 + \frac{1}{2} I_2 {\dot r}_2^2 +
\frac{1}{2 N} \sum_\beta a_\beta \frac{{\dot s}_\beta^2}{A_\beta
\dot r_{1}}\quad (C_\alpha\neq 0)\end{equation} is a regular
Lagrangian for the system (\ref{second}).}
\end{propo}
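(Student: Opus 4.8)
The plan is to attack both parts through the inverse problem of Lagrangian mechanics, using the Helmholtz conditions that characterize when a second-order system $\ddot q^i = f^i(q,\dot q)$ admits a multiplier matrix $g_{ij}(q,\dot q)$ turning it into Euler--Lagrange form. Recall that a nonsingular symmetric matrix $g_{ij}$ is a Lagrangian multiplier for the force terms $f^i$ precisely when it satisfies the algebraic condition $g_{ik}\,\partial f^k/\partial \dot q^j = g_{jk}\,\partial f^k/\partial \dot q^i$, the symmetry condition $\partial g_{ij}/\partial \dot q^k = \partial g_{ik}/\partial \dot q^j$, and the dynamical condition $\nabla g_{ij} = 0$ along the flow (where $\nabla$ is the covariant derivative built from $f$); once such a $g$ is found, the Lagrangian is recovered from $g_{ij} = \partial^2 \tilde L/\partial \dot q^i \partial \dot q^j$.

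For part~1, I would show nonexistence for system~(\ref{first}) by deriving a contradiction from the Helmholtz conditions. First I would write out the explicit force components $f^i$ of (\ref{first}), compute their velocity-derivatives $\partial f^i/\partial \dot q^j$, and feed these into the algebraic and symmetry conditions. The key is that these impose a rigid set of partial differential equations on the entries of $g_{ij}$; I expect that propagating the algebraic constraints together with the closure/integrability conditions (the $\nabla g = 0$ equation differentiated and re-fed into the symmetry condition) will force some multiplier entry to vanish, contradicting nonsingularity of $g$. For the concrete classical examples ($A_\alpha = x$ for the free particle, and the corresponding $A$ for the knife edge and rolling disk), I would specialize and exhibit the obstruction explicitly, since the proposition is only claimed for those cases.

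For part~2, the strategy is direct verification rather than derivation: given the candidate Lagrangians (\ref{Lag1}) and (\ref{Lag2}), I would simply compute their Euler--Lagrange equations and check that, after simplification, they reproduce system~(\ref{second}). This amounts to computing $\frac{d}{dt}(\partial \tilde L/\partial \dot q^i) - \partial \tilde L/\partial q^i$ for each of the coordinates $r_1, r_2, s_\beta$, keeping in mind that $N$ and the $A_\beta$ depend only on $r_1$. The $r_1$-equation should collapse to $\ddot r_1 = 0$, and the $r_2$- and $s_\beta$-equations should, after dividing by the appropriate coefficients and using the relation $K = \sum_\beta I_\beta A_\beta A_\beta'$, reproduce the $-N^2 K \dot r_1 \dot r_2$ and $(A_\alpha' - N^2 K A_\alpha)\dot r_1(\dot s_\alpha/A_\alpha)$ terms. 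I would also check regularity by confirming that the Hessian $\partial^2 \tilde L/\partial \dot q^i \partial \dot q^j$ is nonsingular on the relevant domain (where $\dot r_1 \neq 0$), which is why the constants $C_\alpha$ and $a_\beta$ are required to be nonzero.

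The main obstacle will be part~1: proving nonexistence requires ruling out \emph{every} possible multiplier, not just showing a natural guess fails, so the argument must extract a genuine contradiction from the full Helmholtz system rather than from a single ansatz. The delicate point is organizing the integrability conditions so that the forced vanishing of a multiplier entry becomes unavoidable; this is where the specific structure of the $s_\alpha$-forces in (\ref{first})---which couple through $A_\alpha'$ in a way that (\ref{second}) avoids---should be exploited. Part~2, by contrast, is essentially bookkeeping, and I expect the verification to go through cleanly once the $r_1$-dependence of $N$ and $A_\beta$ is handled carefully.
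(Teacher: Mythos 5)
Your overall architecture matches the paper's: both parts go through the multiplier (Helmholtz) formulation of the inverse problem, with part~1 argued by showing that any candidate multiplier for system~(\ref{first}) is forced to be singular. Your part~2 differs in flavor---you verify the exhibited Lagrangians directly, while the paper obtains them from an analysis of the Helmholtz conditions with carefully chosen ansatzes---but since the Lagrangians are given explicitly in the statement, direct computation of the Euler--Lagrange equations plus a check that the Hessian is nonsingular on $\dot r_1 \neq 0$ is a perfectly sound, and indeed more elementary, way to establish that part.

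The genuine gap is in part~1, and it lies in your statement of the Helmholtz conditions themselves. The algebraic condition is \emph{not} symmetry of $g_{ik}\,\partial f^k/\partial \dot q^j$; that is not a necessary condition at all. A one-line counterexample: $L = \frac{1}{2}(\dot x^2 + \dot y^2) + y\dot x$ has Euler--Lagrange equations $\ddot x = -\dot y$, $\ddot y = \dot x$ with multiplier $g_{ij}=\delta_{ij}$, yet $g_{ik}\,\partial f^k/\partial \dot q^j$ is antisymmetric. The correct algebraic condition is the $\Phi$-condition $g_{ik}\Phi^k_j = g_{jk}\Phi^k_i$, where $\Phi^k_j = \Gamma\left(\partial_{\dot q^j}f^k\right) - 2\,\partial_{q^j}f^k - \frac{1}{2}\partial_{\dot q^j}f^l\,\partial_{\dot q^l}f^k$ involves position derivatives and the flow derivative of $f$, not merely $\partial f/\partial\dot q$. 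This is fatal for a \emph{nonexistence} proof: a contradiction is only probative if it is derived from conditions that every multiplier must satisfy, so feeding in your non-necessary condition and reaching an inconsistency would prove nothing. What the paper actually exploits is the genuine $\Phi$-condition together with the further algebraic conditions derived from it---the $(\nabla\Phi)$-conditions obtained by differentiating along the flow and re-substituting the $\nabla$-condition, and the $R$-conditions obtained from $\dot q$-derivatives---and it is these that force every admissible $(g_{ij})$ for system~(\ref{first}) to be singular. A second, smaller imprecision: forcing ``some multiplier entry to vanish'' does not contradict nonsingularity (a nonsingular matrix can have many zero entries); you must force $\det(g_{ij})=0$, for instance by showing an entire row of $g$ vanishes, which is the form the obstruction actually takes.
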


\begin{proof} We give here only an outline of the method we've used to prove the statements. For full details, see \cite{paper}. Assume we are given a
system of second-order ordinary differential equations \[{\ddot
q}^i=f^i(q,\dot q).\] The search for a regular Lagrangian
 is known in the literature as `the inverse problem of the calculus of variations', and has a
 long history (for a recent survey on this history, see e.g. \cite{KP} and the long list of references therein). In order for a regular Lagrangian
$\tilde L(q,\dot{q})$ to exist we must be able to find functions
$g_{ij}(q,\dot q)$, so-called multipliers, such that
\[
g_{ij}({\ddot q}^j-f^j) = \frac{d}{dt}\left( \fpd{\tilde L}{{\dot q}^i}
\right)-\fpd{\tilde L}{q^i}.
\]
It can be shown \cite{Douglas,Santilli} that the multipliers must
satisfy
\begin{eqnarray*}
&&\det(g_{ij})\neq 0,\quad\quad g_{ji}=g_{ij},\quad\quad
\fpd{g_{ij}}{{\dot q}^k}=\fpd{g_{ik}}{{\dot q}^j};\\&&
\Gamma(g_{ij}) - \nabla^k_j g_{ik}- \nabla^k_i g_{kj}=0,\\ &&
g_{ik}\Phi^k_j = g_{jk}\Phi^k_i;
\end{eqnarray*}
where \[\nabla^i_j = -\onehalf
\partial_{{\dot q}^j}f^i\] and
\[
\Phi^k_j = \Gamma\left(\partial_{{\dot
q}^j}{f^k}\right)-2\partial_{q^j}{f^k}-\onehalf\partial_{{\dot
q}^j}{f^l}\partial_{{\dot q}^l}{f^k}.
\]
The symbol $\Gamma$ stands for the vector field ${\dot
q}^i\partial_{q^i} + f^i
\partial_{{\dot q}^i}$ that can naturally be associated to the system ${\ddot
q}^i=f^i(q,\dot q)$. Conversely, if one can find functions $g_{ij}$
satisfying these conditions then the equations $\ddot{q}^i=f^i$ are
derivable from a regular Lagrangian. Moreover, if a regular
Lagrangian $\tilde L$ can be found, then its Hessian $\spd{\tilde L}{{\dot
q}^i}{{\dot q}^j}$ is a multiplier.

The above conditions are generally referred to as the Helmholtz
conditions. They are a mixed set of coupled algebraic and
PDE conditions in $(g_{ij})$. We will refer to the penultimate
condition as the `$\nabla$- condition,' and to the last one as the
`$\Phi$-condition.' The algebraic $\Phi$-conditions are of course
the most interesting to start from. In fact, we can easily derive
more algebraic conditions (see e.g.\ \cite{Towards}). For example,
by taking a $\Gamma$-derivative of the $\Phi$-condition, and by
replacing $\Gamma(g_{ij})$ everywhere by means of the
$\nabla$-condition, we arrive at a new algebraic condition of
the form
\[
g_{ik}(\nabla\Phi)^k_j = g_{jk}(\nabla\Phi)^k_i,
\]
where \[(\nabla\Phi)^i_j = \Gamma(\Phi^i_j)  -
\nabla^i_m\Phi^m_j-\nabla^m_j\Phi^i_m.\] This $(\nabla\Phi)$-condition will,
of course, only give new information as long as it is independent
from the $\Phi$-condition (this will not be the case, for example,
if the commutator of matrices $[\Phi,\nabla\Phi]$ vanishes). One can
repeat the above process on the $(\nabla\Phi)$-condition, and so on
to obtain possibly independent
$(\nabla\ldots\nabla\Phi)$-conditions.
A second route to additional algebraic conditions arises from the
derivatives of the $\Phi$-equation in $\dot q$-directions. One can
sum up those derived relations in such a way that the terms in
 $\partial_{{\dot q}_k}g_{ij}$ disappear on account of the symmetry in all their indices.
The new algebraic relation in $g_{ij}$ is then of the form
\[
g_{ij}R^j_{kl} + g_{lj}R^j_{ik} + g_{kj}R^j_{li}= 0,
\]
where \[R^j_{kl}= \partial_{{\dot q}^j}(\Phi^k_i)-
\partial_{{\dot q}^i}(\Phi^k_j).\]
As before, this process can
 be continued to obtain more algebraic conditions. Also, any mixture of
the above mentioned two processes leads to possibly new and
independent algebraic conditions. Once we have used up all the
information that we can obtain from this infinite series of
algebraic conditions, we can start looking at the partial
differential equations in the $\nabla$-conditions.

 Let us now come back to the second-order systems (\ref{first}) and (\ref{second}) at hand. The proof of the proposition relies on the fact that for the first systems (\ref{first}), the only matrices $(g_{ij})$ that satisfy the first few algebraic conditions must be non-singular. On the other hand, the two Lagrangians for the system (\ref{second}) follow from an analysis of the Helmholtz conditions with carefully chosen anszatzes.  For more details, see \cite{paper}.
 \end{proof}

Remark that the Lagrangians are not defined for ${\dot r}_1=0$, and we will in general exclude the solutions with that property from the further considerations in this paper.
Any regular Lagrangian system with Lagrangian $\tilde L$ can be transformed into a Hamiltonian one, by making use of the Legendre transformation \[(q^i,{\dot q^i})\mapsto (q^i,p_i = \fpd{\tilde L}{{\dot q}^i}).\] The corresponding Hamiltonian is then \[
\tilde H = p_i q^i -\tilde L.
\] Similarly, the Legendre transformation maps the constraints, viewed as a submanifold of the tangent manifold, onto a submanifold in the cotangent manifold.

%
\begin{propo} {\em
Using the Legendre transformation, the Hamiltonian that corresponds to the Lagrangian (\ref{Lag1})
is given by \[ \tilde H= \frac{1}{2I_1} \left(p_{{1}}+
\frac{1}{2} N \left( \frac{p_{{2}}^2}{C_2} + \sum_\beta A_\beta
\frac{p_\beta^2}{C_\beta} \right)\right)^2. \] The corresponding constraints are
\[
 C_2p_\alpha=-C_\alpha
p_{{2}}.
\] If $N$ is constant, the Hamiltonian that corresponds to the Lagrangian (\ref{Lag2}) is \[ \tilde H= \frac{1}{2I_2} p_2^2 +
\frac{1}{2I_1} \left(p_1+ \frac{1}{2} {N} \left( \sum_\beta
\frac{A_\beta}{a_\beta} {p_\beta^2} \right)\right)^2, \] and the constraints transform into \[ I_2{
{N}{\dot r}_1 p_\alpha} +{a_\alpha} p_2=0,\] where ${\dot
r}_1=(p_1 + \frac{1}{2} {N} \sum_\alpha A_\alpha
p_\alpha^2/a_\alpha)/I_1$.}
\end{propo}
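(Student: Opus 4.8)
The plan is to apply the Legendre transformation $(q^i,\dot q^i)\mapsto(q^i,p_i=\fpd{\tilde L}{\dot q^i})$ directly to each of the two Lagrangians and then invert it on the region $\dot r_1\neq 0$, where $\tilde L$ is regular by Proposition~\ref{Prop1}. For (\ref{Lag1}) I would first record the momenta: $p_2=C_2\dot r_2/(N\dot r_1)$ and $p_\alpha=C_\alpha\dot s_\alpha/(N A_\alpha\dot r_1)$ depend only on the ratios $\dot r_2/\dot r_1$ and $\dot s_\alpha/\dot r_1$, whereas $p_1=I_1\dot r_1-\frac{1}{2N\dot r_1^2}\bigl(C_2\dot r_2^2+\sum_\beta C_\beta\dot s_\beta^2/A_\beta\bigr)$ additionally carries the overall scale $\dot r_1$. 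Inverting the first group gives $\dot r_2=N\dot r_1\,p_2/C_2$ and $\dot s_\alpha=N A_\alpha\dot r_1\,p_\alpha/C_\alpha$.

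The computation of $\tilde H$ is made painless by a homogeneity observation. Write (\ref{Lag1}) as $\tilde L=T_{(2)}+T_{(1)}$ with $T_{(2)}=\onehalf I_1\dot r_1^2$ homogeneous of degree two and the bracketed remainder $T_{(1)}$ homogeneous of degree one in the full set of velocities. Euler's theorem for homogeneous functions then yields $p_i\dot q^i=2T_{(2)}+T_{(1)}$, so that $\tilde H=p_i\dot q^i-\tilde L=T_{(2)}=\onehalf I_1\dot r_1^2$, and it only remains to re-express $I_1\dot r_1$ through the momenta. Substituting the inverted ratio relations into $p_1$ turns the degree-one term $T_{(1)}/\dot r_1$ into the $\dot r_1$-free expression $\frac{N}{2}\bigl(p_2^2/C_2+\sum_\beta A_\beta p_\beta^2/C_\beta\bigr)$; hence $p_1=I_1\dot r_1-\frac{N}{2}\bigl(p_2^2/C_2+\sum_\beta A_\beta p_\beta^2/C_\beta\bigr)$ is linear in $\dot r_1$ and can be solved at once, and squaring $I_1\dot r_1$ reproduces the stated Hamiltonian.

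For the constraints I would push $\dot s_\alpha=-A_\alpha\dot r_2$ through the inverse map: inserting $\dot r_2=N\dot r_1 p_2/C_2$ and $\dot s_\alpha=N A_\alpha\dot r_1 p_\alpha/C_\alpha$ and cancelling the common nonzero factor $N A_\alpha\dot r_1$ gives $C_2 p_\alpha=-C_\alpha p_2$. The Lagrangian (\ref{Lag2}) with $N$ constant is treated by exactly the same two-step recipe: now $p_2=I_2\dot r_2$ decouples and contributes $p_2^2/2I_2$ to $\tilde H$, while the $(r_1,s_\alpha)$ block retains the degree-two/degree-one structure and reproduces the second term with $I_1\dot r_1=p_1+\frac{N}{2}\sum_\beta A_\beta p_\beta^2/a_\beta$; pushing the constraint through then yields $I_2 N\dot r_1 p_\alpha+a_\alpha p_2=0$. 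The only real obstacle is bookkeeping rather than concept: because $p_1$ depends nonlinearly on $\dot r_1$ the Legendre map is not invertible in closed form term-by-term, and one must verify that the homogeneity reduction genuinely collapses the residual relation for $\dot r_1$ to a linear one. Regularity of $\tilde L$ away from $\dot r_1=0$, already guaranteed by Proposition~\ref{Prop1}, ensures that this inversion is well defined.
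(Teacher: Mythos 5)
Your proposal is correct and follows the same route the paper intends: the proposition is established by directly computing the momenta of (\ref{Lag1}) and (\ref{Lag2}), inverting the Legendre map on $\dot r_1\neq 0$ (using the regularity from Proposition~\ref{Prop1}), and pushing the constraints ${\dot s}_\alpha=-A_\alpha\dot r_2$ through that map; the paper itself leaves this computation implicit and only notes an alternative derivation via Pontryagin's maximum principle. Your use of Euler's theorem on the degree-two/degree-one split of $\tilde L$, which collapses $\tilde H$ to the purely quadratic part $\onehalf I_1\dot r_1^2$ (plus $p_2^2/2I_2$ in the second case), is a clean shortcut within that same computation, and all the resulting formulas check out.
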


In \cite{paper,CDC} we explain how the above Hamiltonians can be directly derived from Pontryagin's Maximum principle.

\section{Geometric integrators}\label{sec3}

\subsection{Set-up}

As we explained in the introduction, there are now two ways to compute a numeric approximation of a solution of a system in our class: we can use either a nonholonomic
integrator for the original Lagrangian (\ref{nonholLag}) and constraints (\ref{con}) , or we can use a {variational integrator} for one of the Lagrangians (\ref{Lag1}) and (\ref{Lag2}) we have found in Proposition~\ref{Prop1}. Let us come to some details.

Geometric integrators are integrators that preserve the underlying structure of the system. In particular, variational integrators are integrators that are derived from a discrete version of Hamilton's principle. From this discrete variational principle one obtains the so-called discrete Euler-Lagrange equations as follows.
For a mechanical system with Lagrangian
$L$, one needs to choose a discrete Lagrangian $L_d(q_1,q_2)$ (a
function on $Q\times Q$ which resembles as close as possible the
continuous Lagrangian). A solution $q(t)$ is then discretised by an array
$q_k$ which are the solutions of the so-called discrete Euler-Lagrange
equations
\begin{equation}\label{discretevar}
D_1L_d (q_k,q_{k+1}) + D_2 L_d (q_{k-1},q_k)=0.
\end{equation}
These integrators preserve the symplectic and conservative nature of the algorithms.    It is important to realize that a different choice for the discrete Lagrangian may lead to a different geometric integrator. The presence of additional holonomic constraints (i.e.\ `integrable' nonholonomic constraints) can be included by introducing Lagrange's multipliers.

On the other hand, for a nonholonomic integrator of a
nonholonomic system with Lagrangian $L$ and constraints
$\omega^a(q){\dot q}^a=0$, we need to choose both a discrete
Lagrangian $L_d$ and discrete constraint functions
$\omega^a_d$ on $Q\times Q$. The nonholonomic discrete equations are then
\begin{equation}\label{discretenonhol} \left\{ \begin{array}{l}
D_1L_d (q_k,q_{k+1}) + D_2 L_d (q_{k-1},q_k)= (\lambda_k)_a
\omega^a(q_k), \\[2mm] \omega^a_d (q_k,q_{k+1}) =0.\end{array} \right.
\end{equation}
Usually, if $Q$ is a vector space, one takes the discretization in
one of the following ways (for certain $\alpha$ and certain $h$):
\begin{eqnarray} \label{Ld}
L_d (q_1,q_2) &=& L \left(q= (1-\alpha)q_1 + \alpha q_2, {\dot q} =
\frac{q_2-q_1}{h}\right),\\[2mm]
 \omega^a_d (q_1,q_2) &=& \omega^a_i \left(q= (1-\alpha)q_1 + \alpha q_2\right)\frac{q^i_2-q^i_1}{h}.\label{omegad}
\end{eqnarray}
For the rest of the paper, we will concentrate on this discretization procedure. There are, however, many more possibilities to obtain a discrete Lagrangian and discrete constraints. For example, one could take a symmetrized version of the above procedure and use discrete Lagrangians and discrete constraints of the form
\begin{eqnarray*}
L_d (q_1,q_2) &=& \onehalf L \left(q= (1-\alpha)q_1 + \alpha q_2, {\dot q} =
\frac{q_2-q_1}{h}\right) \\ && + \onehalf L \left(q= \alpha q_1 + (1- \alpha) q_2, {\dot q} =
\frac{q_2-q_1}{h}\right),\\[2mm]
 \omega^a_d (q_1,q_2) &=& \onehalf \omega^a_i \left(q= (1-\alpha)q_1 + \alpha q_2\right)\frac{q^i_2-q^i_1}{h}\\ && + \onehalf \omega^a_i \left(q= \alpha q_1 + (1-\alpha) q_2\right)\frac{q^i_2-q^i_1}{h}.
\end{eqnarray*}Also, if the system is invariant under a symmetry group, it is advantageous to construct the integrator in such a way that the discrete system inherits as many as possible of those symmetry properties, see e.g.\ \cite{CortesMartinez}.

The bottom line of the next sections is the following one. If a free Lagrangian for the nonholonomic
system exists, it seems reasonable that the Lagrangian integrator may perform better than a
nonholonomic integrator with badly chosen discrete constraints. In the next sections, we will test this conjecture on a few of the classical examples in our class: the vertically rolling disk, the knife edge and the nonholonomic particle. It will be convenient that for those systems an exact solution of the nonholonomic equations (\ref{nonhol}) is readily available.

\subsection{The vertically rolling disk}

For the vertically rolling disk, we have $(r_1,r_2,s_\alpha) = (\varphi,\theta,x,y)$. It is well-known that the solutions of the nonholonomic equations  with initial conditions $u_\varphi=\dot \varphi(0) \neq 0$ and $u_\theta = \dot\theta(0)$ are all circles with radius $R(u_\theta/u_\phi)$:
\begin{eqnarray}
\theta(t)  &=& u_{\theta}t + \theta_{0},\qquad \varphi(t)  =
u_{\varphi}t + \varphi_{0},\nonumber\\[2mm]
x(t)  &=& \left(\frac{u_{\theta}}{u_{\varphi}}\right)R\sin(\varphi(t)) + x_0,  \nonumber\\[2mm]
y(t)  &=&
-\left(\frac{u_{\theta}}{u_{\varphi}}\right)R\cos(\varphi(t)) + y_0.
\label{vd2}
\end{eqnarray}
Let us put for convenience $M=1$ and $R=1$ and therefore $I=\frac{1}{2}$ and $J = \frac{1}{4}$.
With that the (nonholonomic) Lagrangian and constraints are simply
\[ L = \frac{1}{2} ({\dot
x}^2 + {\dot y}^2) + \frac{1}{4} {\dot\theta}^2 + \frac{1}{8}
{\dot\varphi}^2,\quad   \dot x  = \cos\varphi \dot\theta, \quad  \dot y  =
\sin\varphi \dot\theta.\]
We will first compute the solution of the discrete nonholonomic equations (\ref{discretenonhol}) with the discrete Lagrangian (\ref{Ld}) and the discrete constraints (\ref{omegad}).
Second, since the vertically rolling disk is one of those examples with a constant invariant measure density $N$, we can choose a Lagrangian from the second type (\ref{Lag2}). The simplest choice is probably
\begin{equation}\label{free}
\tilde L =
1/2\left({\dot\varphi}^2+{\dot\theta}^2+ \frac{{\dot x}^2}{\cos(\varphi) {\dot\varphi}} +\frac{{\dot y}^2}{\sin(\varphi)\dot\varphi}\right).
\end{equation}
We now investigate the variational integrator of this Lagrangian, where the discrete Lagrangian is given by (\ref{Ld}). We will fix $h$ (changing it did not have a significant effect) and only concentrate on what happens if we keep $\alpha$ variable.
\begin{figure}[h]
\begin{center}\includegraphics[scale = 0.6]{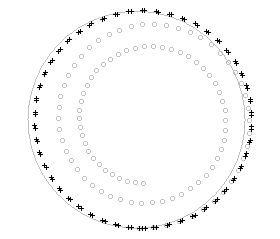} 
\end{center}\caption{Vertically rolling disk with $\alpha=0$.\label{fig1}} \end{figure}
In figure~\ref{fig1} we have plotted the situation for $\alpha = 0$. For a given set of initial positions $(x_0,y_0,\theta_0,\varphi_0,\theta_1,\varphi_1)$ the other initial conditions were chosen in such a way that the solution lies initially on the discrete constraint manifold, i.e.\ in such a way that
\[
x_1= x_0 + \cos\varphi_0 (\theta_1-\theta_0), \quad y_1= y_0 + \sin\varphi_0 (\theta_1-\theta_0).
\]
Unlike the nonholonomic integrator (in grey with circle symbols) the variational integrator (in black with cross symbols) does not show a strong spiral-type solution but a circular path. It is true, however, that the variational solution deviates from the circle predicted by the initial conditions of the solution (\ref{vd2}) (in grey in figure~\ref{fig1}). However, since any circle is determined by 3 of its points, we can find a better match for the circle the variational discrete solution follows by considering the outcome $(x_{k_i},y_{k_i})$ at three different times and by solving the three equations
 \[
 (x_{k_i} - A)^2 +  (x_{k_i} - B)^2 = C^2
 \]
 for $(A,B,C)$. If we do so, we obtain the matching circle (in dots) in figure~\ref{fig2}.
\begin{figure}[h]
\begin{center} \includegraphics[scale=0.5]{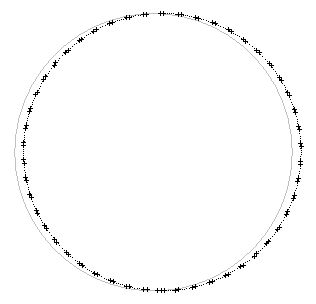}
\end{center}\caption{Vertically rolling disk: circular path.\label{fig2}} \end{figure}

It is well-known that the energy
\[E= \frac{1}{2}({\dot x}^2 + {\dot y}^2) + \frac{1}{4} {\dot\theta}^2 + \frac{1}{8} {\dot\varphi}^2,\] is conserved along the solutions (\ref{vd2}) of the nonholonomic equations of motion. In figure~\ref{fig3} we investigate the performance of the two integrators on the energy function.  The discrete version of the energy is the function we get by substituting, as usual, $1/h(q_k -q_{k-1})$ for ${\dot q}$ in the function $E$ above. The straight line in figure~\ref{fig3} is the energy level predicted by the initial conditions. It is clear that the variational integrator (with crosses) does a better job than the nonholonomic one (with circles).
\begin{figure}[h]
\begin{center}
\includegraphics[scale=0.25]{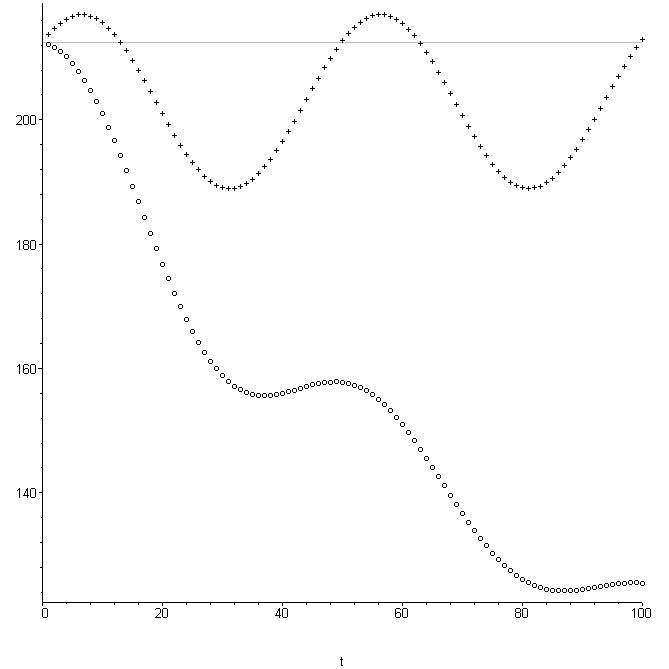} 
\end{center}\caption{Vertically rolling disk: the energy.\label{fig3}}\end{figure}

By construction the nonholonomic integrator conserves the constraints and the variational integrator does not. Indeed, in figure~\ref{fig5} we have plotted the constraint $\dot x - R\cos(\varphi) \dot\theta = 0$. Positive is that, although the variational integrator does not conserve this constraint, it reasonably oscillates around the zero level. Moreover, there is a method to fix this problem. We can introduce a `modified' variational integrator which does conserve the constraints. This integrator considers the
constraints as a constant along the (nonholonomic) motion. That is, it will use the variational discrete Lagrange equations (\ref{discretevar}) for the variables $\theta$ and $\varphi$ (for the free Lagrangian $\tilde L$ given in (\ref{free})), but not the corresponding equations for $x$ and $y$. To get a full system of equations, we  supplemented this with the discrete constraints $\omega^a_d (q_k,q_{k+1}) =0$ which can be written in terms of $x_{k+1}$ and $y_{k+1}$. Figure~\ref{fig4} shows the modified integrator for $\alpha=0$ (with box symbols). The circle in that figure is the one we had before, i.e.\ the one that matches the variational integrator. It shows that the modified integrator has the same circular behaviour as the variational integrator, and on top, it keeps the constraints conserved, see the box symbols on the zero level in figure~\ref{fig5}.
\begin{figure}[h]
\begin{center}
\includegraphics[scale=0.35]{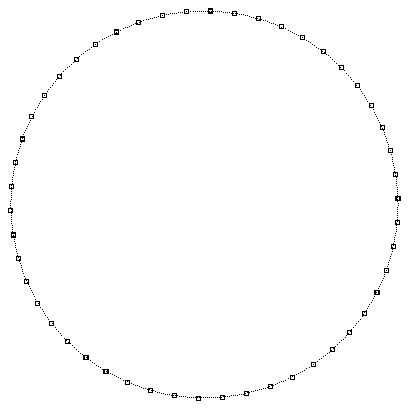} 
\end{center}\caption{Vertically rolling disk: the modified integrator. \label{fig4}}\end{figure}

\begin{figure}[h]
\begin{center}
\includegraphics[scale=0.25]{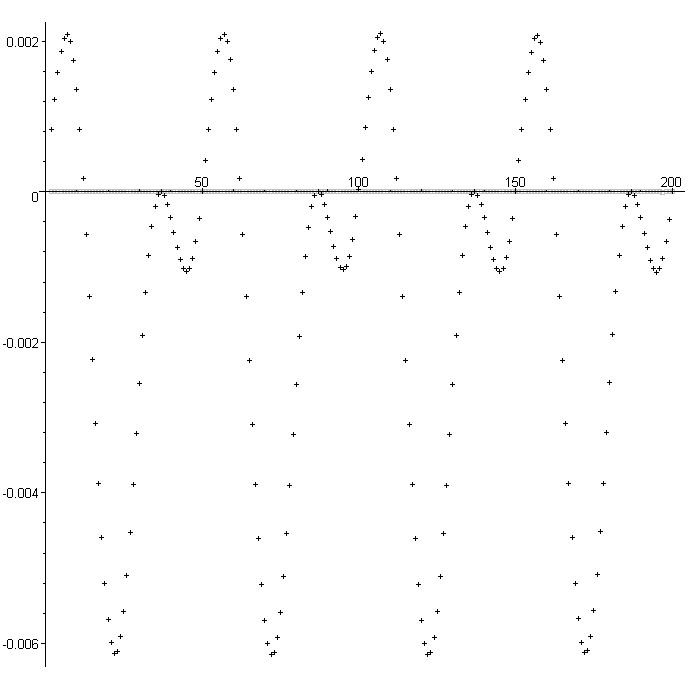} 
\end{center}\caption{Vertically rolling disk: the constraints. \label{fig5}}\end{figure}

Finally, figure~\ref{fig6} shows the effect of changing the parameter $\alpha$. The results for the variational integrator (in black with cross symbols) remain accurate and more or less unchanged. For the nonholonomic integrator (in grey with circle symbols) the effect of changing $\alpha$ is that the inward spiral becomes an outward spiral. At some point (here $\alpha=1/2$) the variational and nonholonomic integrator have the same accuracy.
\begin{figure}[h]
\begin{tabular}{lll}
\begin{minipage}{2.3cm}

\includegraphics[scale=0.4]{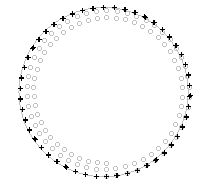} 

\end{minipage}
& \begin{minipage}{2.3cm}

\includegraphics[scale=0.4]{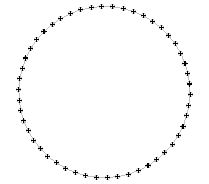} 
\end{minipage} & \begin{minipage}{2.3cm}

\includegraphics[scale=0.4]{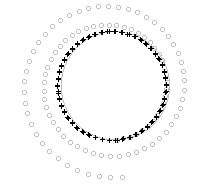} 
\end{minipage}

\end{tabular}
\caption{Vertically rolling disk with $\alpha=1/3, 1/2, 1$, repectively. \label{fig6}}
\end{figure}

\subsection{The knife edge}

As was the case with the vertically rolling disk, also the solutions of the knife edge form a circular path in the $(x,y)$-plane. Continuing the analogue with the previous example, the nonholonomic integration (in grey with circle symbols) results in a spiral, while the variational integration (in black with crosses) follows more closely the circular path, see figure~\ref{knife}.
\begin{figure}[h]\begin{center}
\includegraphics[scale=0.6]{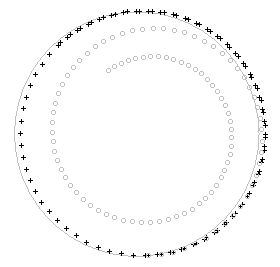}\end{center} \caption{The knife edge with $\alpha=0$ \label{knife}} 
\end{figure}

\subsection{The nonholonomic particle}

The function
\[
\tilde L = \frac{1}{2} {\dot x}^2+ \frac{\sqrt{1+x^2}}{2} \left(\frac{{\dot y}^2}{{\dot x}} +\frac{{\dot z}^2}{x{\dot x}}\right)
\]
is a free Lagrangian for the nonholonomic particle.
In each of the figures~\ref{fig7} and \ref{fig8} the dashed black curve represents the exact solution, the thick black the variational solution and the thick grey the nonholonomic solution. The figures show that both the variational method and the nonholonomic one do not give very accurate solutions. However, changing the parameter $\alpha$ does not seem to affect the variational solution as much as it does the nonholonomic one. Indeed, the variational solution remains more or less of the same accuracy for the different $\alpha$-values. On the other hand, the nonholonomic solution can be made more or less accurate by changing $\alpha$. It seems that the best accuracy is reached somewhere in the neighbourhood of $\alpha = 1/3$, but how could one have guessed this beforehand? Remark also that this value is not same as the the best choice we had found for the nonholonomic integrator of the vertically rolling disk (where $\alpha = 1/2$ gave the best accuracy).

\begin{figure}[h]

 \begin{center}

\includegraphics[scale=0.175]{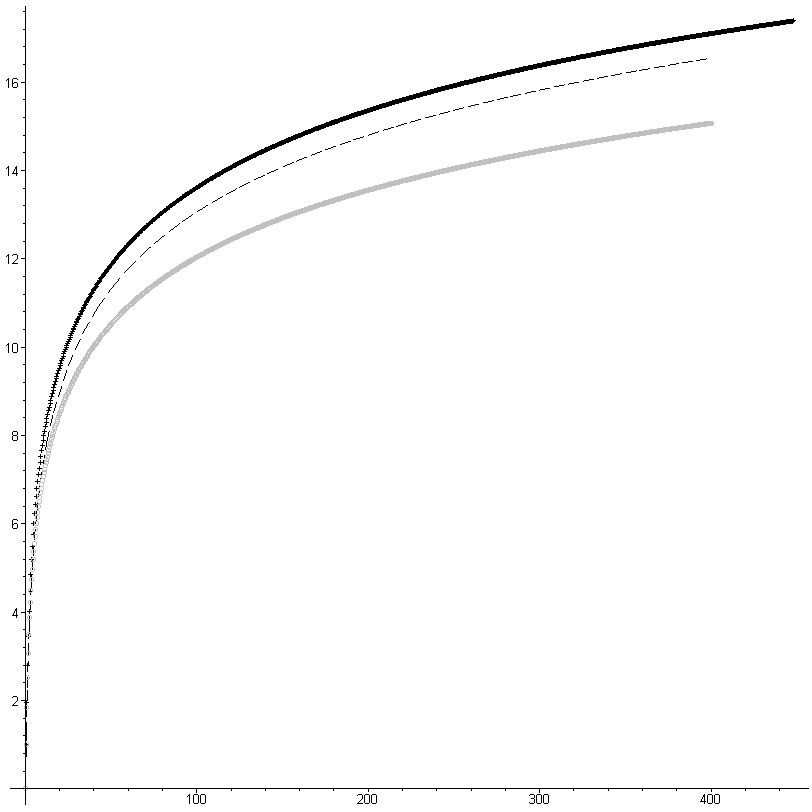} 

\end{center}

\begin{tabular}{cc}
\begin{minipage}{3.9cm}

\includegraphics[scale=0.175]{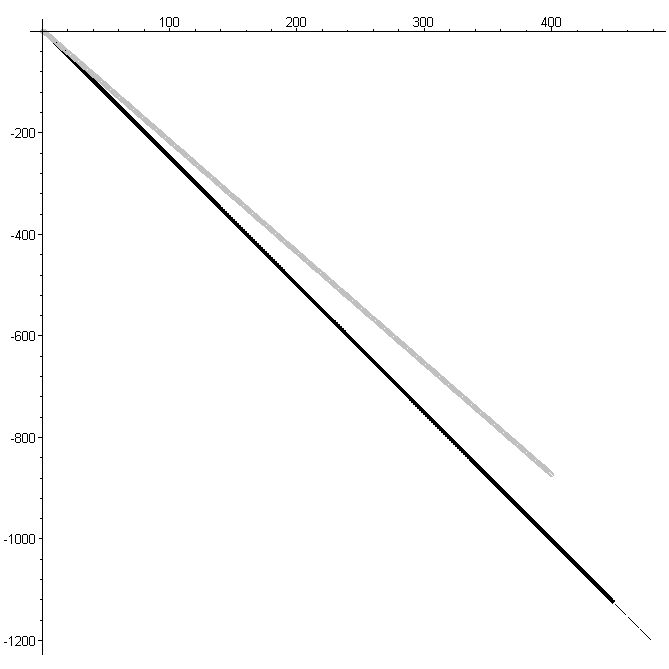} 
\end{minipage} & \begin{minipage}{3.9cm}

\includegraphics[scale=0.175]{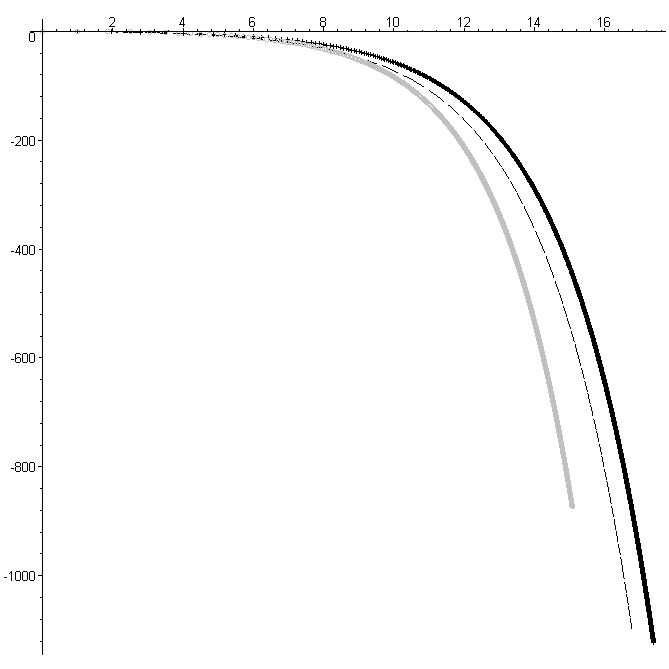} 
\end{minipage}
\end{tabular}
\caption{The nonholonomic particle: $xy$-, $xz$- and $yz$-solution with $\alpha=0$.\label{fig7}}
\end{figure}

\begin{figure}[h]
\begin{tabular}{cc}
\begin{minipage}{3.9cm}

\includegraphics[scale=0.15]{fig9a} 

\end{minipage}
& \begin{minipage}{3.9cm}

\includegraphics[scale=0.15]{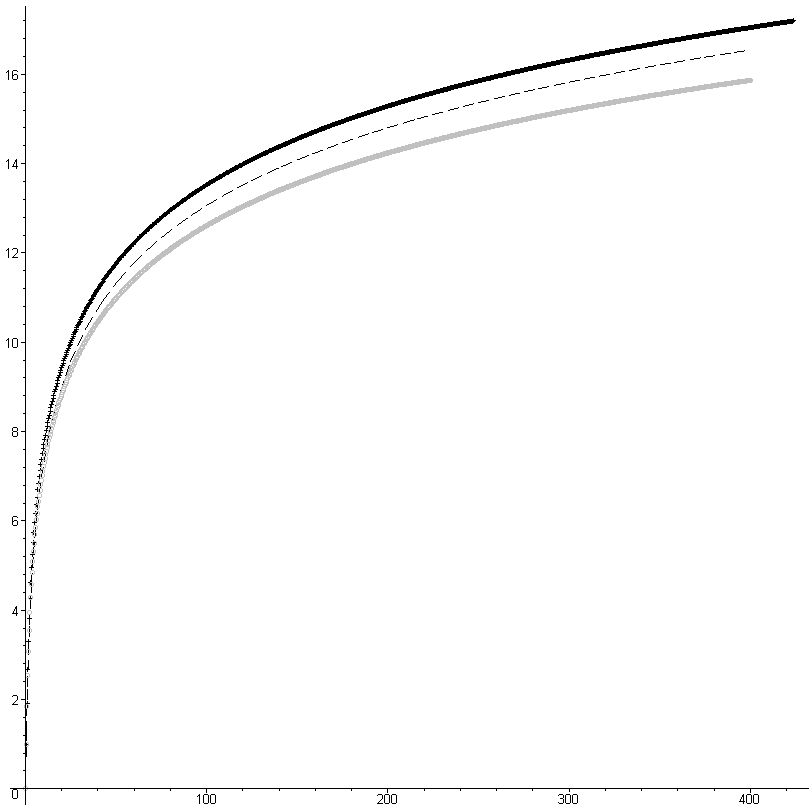} 
\end{minipage} \\  \begin{minipage}{3.9cm}

\includegraphics[scale=0.15]{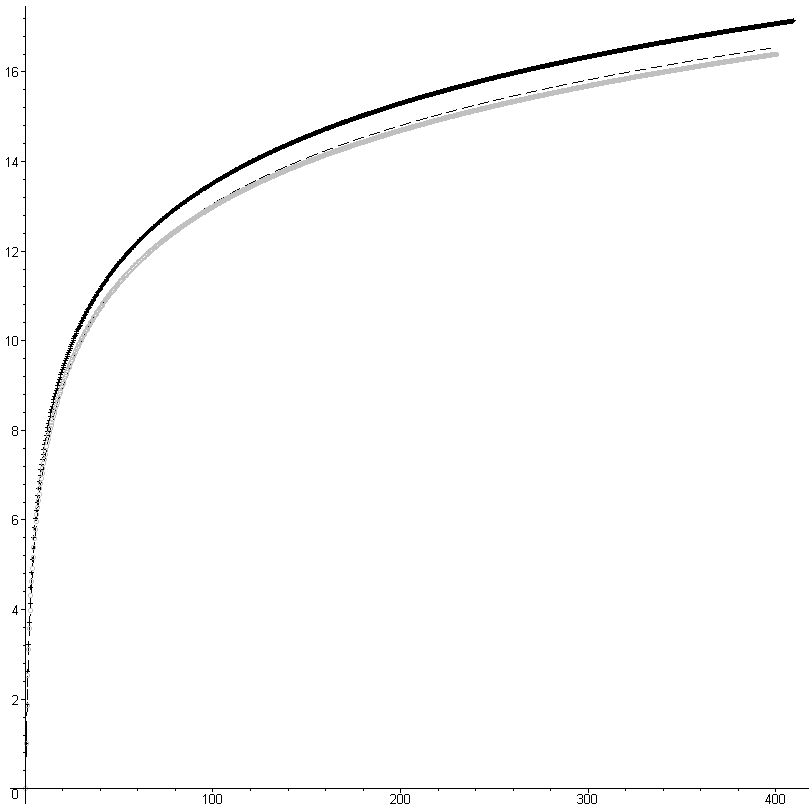} 
\end{minipage}

&

\begin{minipage}{3.9cm}

\includegraphics[scale=0.15]{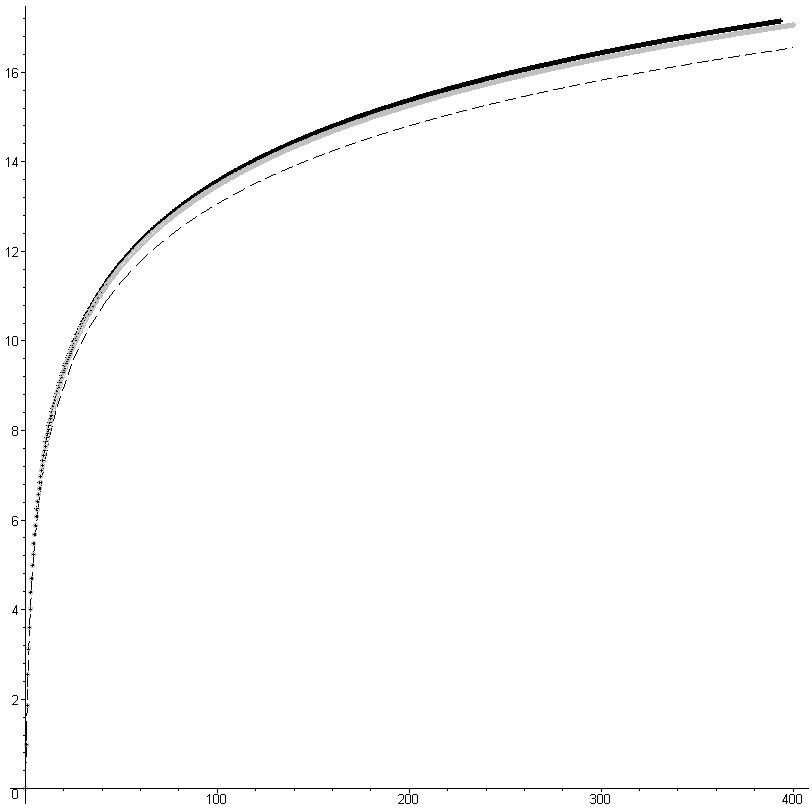} 

\end{minipage}
\\ \begin{minipage}{3.9cm}

\includegraphics[scale=0.15]{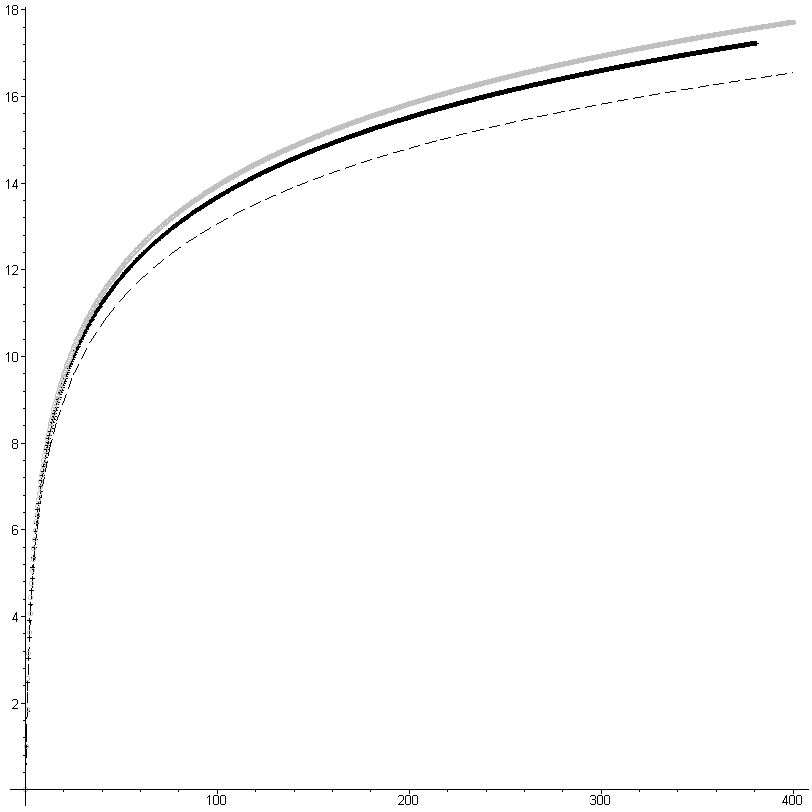} 
\end{minipage} & \begin{minipage}{3.9cm}

\includegraphics[scale=0.15]{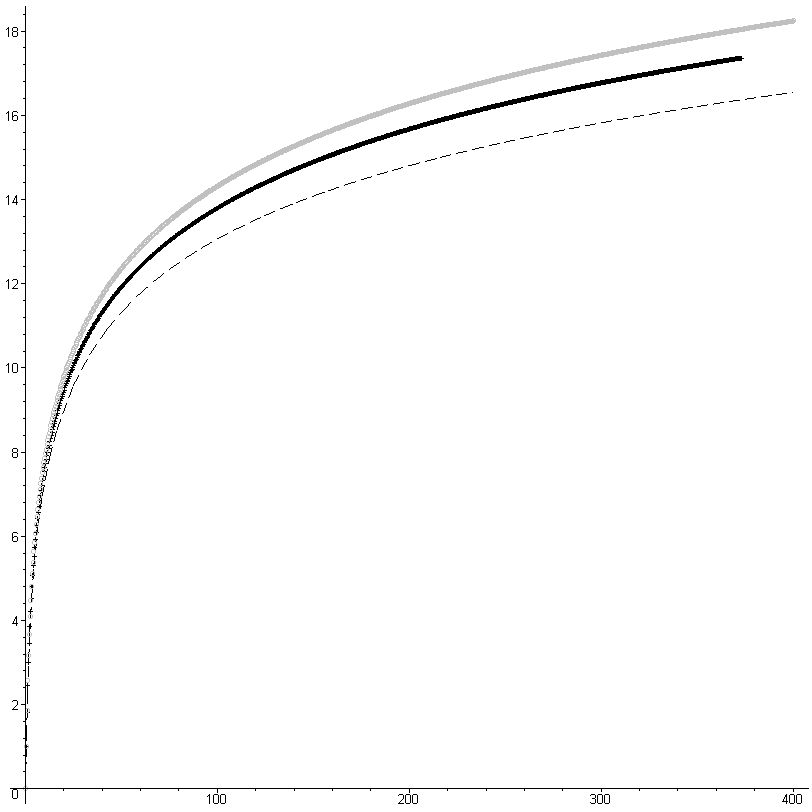} 
\end{minipage}

\end{tabular}

\caption{The nonholonomic particle: $xy$-solution with $\alpha=0, 1/5, 1/3, 1/2, 2/3, 4/5$, respectively.\label{fig8}}
\end{figure}

\subsection{Preliminary conclusion}

In each of the discussed examples the variational integrator (with one of the  Lagrangians (\ref{Lag1}) and (\ref{Lag2}) of proposition~\ref{Prop1}) seemed to give better results than the known nonholonomic integrators. Unlike the outcome for the nonholonomic integrator, the results for the variational integrator seemed to be independent of or, at least, stable under changing the parameter $\alpha$. Needless to say, the results above are, of course, very partial and are they are only intended to motivate further investigation on this topic. For example, we need to check if more involved discretization procedures, such as the ones mentioned at the end of section~\ref{sec3}A demonstrate the same behaviour as the one we have encountered so far.

\section{Further systems}\label{sec4}

The class of nonholonomic systems treated above is very restricted. The reason is, of course, that the search for a solution of the inverse problem of the calculus of variations (in the proof of proposition~\ref{Prop1}) is too hard and too technical to be treated in the full generality of a nonholonomic systems with an arbitrary given Lagrangian and arbitrary given constraints. Also, since there are infinitely many possible choices for the associated systems, it is not clear from the outset which one of them will be variational, if any.

For these reasons, future extensions of the obtained results will strongly depend on well-chosen particular new examples. For example, we could try to find a free Lagrangian for a nonholonomic system with a potential of the form $V(r_2)$. Typical examples of such systems are the mobile robot with a fixed orientation\[\begin{array}{l}
 L= \onehalf m ({\dot x}^2 + {\dot y}^2) + \onehalf I
{\dot\theta}^2 + \frac{3}{2}J {\dot\psi}^2 - 10 \sin\psi, \\[2mm]
\dot x = R\cos\theta\dot\psi,\quad \dot y = R\sin\theta\dot\psi, \end{array} \] (an example that also appears in the paper \cite{CortesMartinez}) or the knife edge on an inclined plane, where \[ L= \onehalf m ({\dot x}^2
+ {\dot y}^2) + \onehalf J {\dot\phi}^2 + mg x \sin\alpha, \quad
\dot x \sin \phi = \dot y \cos\phi. \]
In more general terms, such systems have a Lagrangian of the form
  \[ L=\frac{1}{2}(I_1{\dot r_{1}}^2+I_2{\dot r_{2}}^2 +
I_3 {\dot s}^2) -  V(r_2)\]
 and a constraint of the form \[ {\dot s} =-A (r_{1})\dot r_{2},\] and we can
we can consider associated second-order equations, in a way that is analogous to the way we arrived at the second system (\ref{second}) before: They are now of the form
\begin{equation}\label{extension} \left\{ \begin{array}{l}
 \ddot r_{1} =0, \\[2mm] \ddot r_{2} = \Gamma_2 (r_1) \dot r_{1}\dot
 r_{2} + t_2 (r_1,r_2),\\[2mm]
 {\ddot s} =  \Gamma_3(r_1) \dot r_{1} {\dot s}  +
 t_3(r_1,r_2). \end{array} \right.
 \end{equation}
 Remark that compared to the equations (\ref{second}), the presence of the extra potential brings the terms $t_i(r_1,r_2)$ into the picture. A first result is the following.
 \begin{propo} {\em There does not exists a regular Lagrangian for the second order systems (\ref{extension}).}
\end{propo}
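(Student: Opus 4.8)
The plan is to run, for the system (\ref{extension}), exactly the algebraic machinery used in Proposition~\ref{Prop1}, and to show that the Helmholtz conditions admit only singular multipliers $(g_{ij})$. Writing $(q^1,q^2,q^3)=(r_1,r_2,s)$ with $f^1=0$, $f^2=\Gamma_2\dot r_1\dot r_2+t_2$ and $f^3=\Gamma_3\dot r_1\dot s+t_3$, I would first compute $\nabla^i_j$ and the matrix $\Phi^k_j$. Two structural features organize everything. First, because $f^1=0$ the whole first row vanishes, $\Phi^1_j=0$. Second, and crucially, the potential enters the $f^i$ only through the position functions $t_2(r_1,r_2)$ and $t_3(r_1,r_2)$, which carry no velocity; hence the velocity-quadratic parts of $\Phi$ coincide exactly with those of the potential-free system (\ref{second}), while the new terms $t_2,t_3$ contribute to $\Phi$ only velocity-independent pieces — for instance $\Phi^3_2=-2\partial_{r_2}t_3$ (whereas $\Phi^2_3=0$), together with constant-in-velocity summands inside $\Phi^2_1,\Phi^3_1,\Phi^2_2$.

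I would then record that among the derived algebraic conditions the $R$-tensor $R^j_{kl}$, being built only from $\dot q$-derivatives of $\Phi$ (which annihilate the velocity-independent potential terms), does not see the potential and coincides with that of (\ref{second}); by contrast the $(\nabla\Phi)$-tensor and the $\nabla$-condition, whose operator $\Gamma$ contains the extra $t_2\partial_{\dot r_2}+t_3\partial_{\dot s}$, do carry it. Consequently any admissible $(g_{ij})$ for (\ref{extension}) must in particular satisfy all the potential-free velocity-algebraic conditions already analysed for (\ref{second}) in \cite{paper}, which constrains the allowed velocity-dependence of the multiplier to that of the (\ref{second})-family (whose representatives have Hessians such as the one coming from (\ref{Lag2})). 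The potential can then obstruct solvability only through the genuinely new relations: the velocity-independent parts of the $\Phi$-condition $g_{ik}\Phi^k_j=g_{jk}\Phi^k_i$ and the extra $t_i$-terms in the $(\nabla\Phi)$- and $\nabla$-conditions.

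The heart of the argument is to show these new relations are incompatible with $\det(g_{ij})\neq0$. I would single out the $(2,3)$-component of the $\Phi$-condition, $g_{23}(\Phi^3_3-\Phi^2_2)=g_{33}\Phi^3_2$; separating it into its velocity-quadratic and velocity-independent parts (legitimate once the velocity-structure of $g$ is pinned as above) yields $g_{23}(a_3-a_2)=0$, with $a_i=\Gamma_i'-\onehalf\Gamma_i^2$, together with $g_{23}\partial_{r_2}t_2+g_{33}\partial_{r_2}t_3=0$. Using $a_3\neq a_2$ and $\partial_{r_2}t_3\neq0$ for a genuine potential, these force $g_{23}=0$ and then $g_{33}=0$. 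Feeding this into the $(1,3)$-component $g_{13}\Phi^3_3=g_{23}\Phi^2_1+g_{33}\Phi^3_1$ gives $g_{13}\Phi^3_3=0$, and since $\Phi^3_3=(\Gamma_3'-\onehalf\Gamma_3^2)\dot r_1^2$ is not identically zero for the examples at hand, $g_{13}=0$ as well. The whole $s$-row and $s$-column of $(g_{ij})$ then vanish, so $\det(g_{ij})=0$, contradicting regularity.

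The step I expect to be the genuine obstacle is the legitimacy of this velocity-power separation: a priori the $g_{ij}$ depend on the velocities, so one cannot simply match coefficients of $\dot r_1^2$ in the $(2,3)$-equation. The careful part is therefore to use the potential-free algebraic conditions (above all the $R$-condition, which coincides with that of (\ref{second})) together with the closure conditions $\partial_{\dot q^k}g_{ij}=\partial_{\dot q^j}g_{ik}$ to fix the velocity-dependence of an admissible multiplier before imposing the potential relations; only then do the $\partial_{r_2}t_i$-terms become honest algebraic constraints on the position-coefficients of $g$. A secondary point is to dispose of the degenerate subcases — for instance $a_3=a_2$, or systems with $\partial_{r_2}t_3\equiv0$ — which either lie outside the cited examples or are excluded by invoking the $(\nabla\Phi)$- and $\nabla$-conditions, whose operators genuinely contain the $t_i$ and thus supply the relations missing from the $\Phi$-condition alone.
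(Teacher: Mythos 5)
Your overall route coincides with the paper's (the paper's terse proof is exactly ``a careful analysis of the algebraic conditions'' derived from the Helmholtz conditions, aimed at showing every admissible multiplier is singular), and your computation of $\Phi$ is correct: $\Phi^1_j=0$, $\Phi^2_3=0$, $\Phi^3_2=-2\partial_{r_2}t_3$, with the $t_i$ contributing only velocity-independent pieces. But the execution has two genuine gaps. The first is the step you flag yourself, and flagging it does not close it. The algebraic Helmholtz conditions are \emph{pointwise} linear relations on the values $g_{ij}(q,\dot q)$: at each fixed $(q,\dot q)$ your $(2,3)$-condition $g_{23}\bigl[(a_3-a_2)\dot r_1^2+2\partial_{r_2}t_2\bigr]=-2\,g_{33}\,\partial_{r_2}t_3$ (with $a_i=\Gamma_i'-\onehalf\Gamma_i^2$) is a single linear equation in $(g_{23},g_{33})$ and admits many nonzero solutions, e.g.\ $g_{23}=-2\partial_{r_2}t_3\,\phi$, $g_{33}=\bigl[(a_3-a_2)\dot r_1^2+2\partial_{r_2}t_2\bigr]\phi$ for arbitrary $\phi$. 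Splitting it into $g_{23}(a_3-a_2)=0$ and $g_{23}\partial_{r_2}t_2+g_{33}\partial_{r_2}t_3=0$ presupposes control over the velocity dependence of \emph{every} admissible multiplier, and the source you propose for that control does not exist: neither this paper nor \cite{paper} classifies all multipliers of (\ref{second}); Proposition~\ref{Prop1} obtains (\ref{Lag1}) and (\ref{Lag2}) only from ``carefully chosen ansatzes,'' so the $R$-condition plus closure conditions cannot be said to pin the velocity dependence ``to the (\ref{second})-family.'' Thus the heart of your contradiction ($g_{23}=g_{33}=g_{13}=0$) is never actually derived; the missing work --- either classifying the velocity dependence via the closure and $\nabla$-conditions, or accumulating enough pointwise derived conditions ($\nabla\Phi$, $R$, $\nabla\nabla\Phi$, \dots) to force singularity --- is precisely the analysis the paper alludes to.

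The second gap is more concrete: your non-degeneracy hypotheses exclude the very example the proposition is designed for. For the one-constraint systems of section~\ref{sec4} one computes $t_2=-N^2V'(r_2)$ and $t_3=A\,N^2V'(r_2)$, so $\partial_{r_2}t_3=A\,N^2V''(r_2)$. The knife edge on the inclined plane has $V$ \emph{linear} in $r_2$ (the potential is $-mgx\sin\alpha$ with $r_2=x$), hence $V''\equiv0$, so $\partial_{r_2}t_2=\partial_{r_2}t_3\equiv0$ and $\Phi^3_2=0$: your mechanism yields nothing, since the $(2,3)$-condition collapses to the potential-free one. Your disclaimer that such degenerate cases ``lie outside the cited examples'' is false --- the inclined knife edge is one of the two cited examples, and the other (the mobile robot) has two constraints and is explicitly left open by the paper. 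In this case the potential survives only in the velocity-independent parts of $\Phi^2_1$ and $\Phi^3_1$ and in the $\Gamma$-derivatives entering the $\nabla\Phi$- and $\nabla$-conditions, so the contradiction must be extracted there; ``invoking'' those conditions is asserted, not carried out. As written, therefore, your argument does not prove the proposition even for the paper's own leading example, let alone for general $t_i(r_1,r_2)$.
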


\begin{proof}
As before, the proof follows from a careful analysis of the algebraic conditions which can be derived from the Helmholtz conditions.
\end{proof}

For systems with more than one constraint, the result is still open. Remark that the proposition does not exclude the existence of an other variational `associated' system.

\section*{Acknowledgments}
TM acknowledges a Marie Curie Fellowship within the
 6th European Community Framework Programme and a postdoctoral fellowship of the  Research Foundation - Flanders. The research of AMB and OEF was supported in part by the Rackham
Graduate School of the University of Michigan, through the Rackham
Science award, and through NSF grants DMS-0604307 and CMS-0408542.

\nocite{*}
\bibliographystyle{natcong}

\begin{thebibliography}{1}


\bibitem{Bloch} A.M.\ Bloch, \itl{Nonholonomic Mechanics and Control}, Springer (2003).
\bibitem{paper} A.M.\ Bloch, O.E.\ Fernandez and T.\ Mestdag, Hamiltonization of nonholonomic systems and
the inverse problem of the calculus of variations, to appear in {\em Rep.\ Math.\ Phys.}, arXiv:0812.0437.
\bibitem{QB} M.\ Abud Filho, L.C.\ Gomes, F.R.A.\ Simao and F.A.B.\ Coutinho, The Quantization of Classical Non-holonomic Systems,
 \itl{Revista Brasileira de Fisica} {\bf 13} (1983) 384-406.

 \bibitem{CortesMartinez} J.\ Cortes and S.\ Martinez, Nonholonomic
integrators, {\em Nonlinearity} {\bf 14} (2001), 1365-1392.

\bibitem{Towards} M.\ Crampin, W.\ Sarlet, G.B.\ Byrnes and G.E.\ Prince,
Towards a geometrical understanding of Douglas's solution of the
inverse problem of the calculus of variations, {\em Inverse
Problems} {\bf  10} (1994) 245-260.


\bibitem{Douglas} J.\ Douglas, Solution of the inverse problem of
the calculus of variations, {\em Trans.\ Amer.\ Math.\ Soc.} {\bf
50} (1941) 71--128.

 \bibitem{FedZen} Y.N.\ Fedorov and D.V.\ Zenkov, Discrete Nonholonomic LL systems on Lie Groups, {\em Nonlinearity} {\bf 18} (2005), 2211–-2241.

\bibitem{FB} O.E.\ Fernandez and A.M.\ Bloch, Equivalence of the Dynamics of Nonholonomic and
Variational Nonholonomic Systems for certain Initial Data, {\em J.\
Phys.\ A: Math.\ Theor.} {\bf 41} 344005 (20pp).

\bibitem{CDC}  O.\ Fernandez, A.M.\ Bloch and T.\
Mestdag, The Pontryagin maximum principle applied to nonholonomic
mechanics, Proc.\ 47th IEEE Conference on
Decision and Control, Cancun (Mexico), Dec.\ 9-11, 2008, 4306--4311.

\bibitem{KP} O.\ Krupkov\'a and G.E.\ Prince, Second-order ordinary
differential equations in jet bundles and the inverse problem of the
calculus of variations, Chapter 16 of D.\ Krupka and D.\,J.\
Saunders (eds.), {\em Handbook of Global Analysis\/}, Elsevier
(2007), 837-904.


\bibitem{West} J.E.\ Marsden and M.\ West, Discrete mechanics and variational integrators, {\em Acta Num.} {\bf 10} (2001), 357--514.

\bibitem{Santilli} R.M.\ Santilli, \itl{Foundations of Theoretical Mechanics I}, Spinger (1978).








\end{thebibliography}

%

%

\end{document}